\newtheorem{Example}{Example}
\def\BibTeX{{\rm B\kern-.05em{\sc i\kern-.025em b}\kern-.08em
    T\kern-.1667em\lower.7ex\hbox{E}\kern-.125emX}}
\begin{document}

\title{Age of Information Guaranteed Scheduling for Asynchronous Status Updates in Collaborative Perception}

\author{\IEEEauthorblockN{Lehan~Wang$^*$, Jingzhou~Sun$^*$, Yuxuan~Sun$^\dag$, Sheng~Zhou$^*$, Zhisheng~Niu$^*$}
	
	\IEEEauthorblockA{$^*$Beijing National Research Center for Information Science and Technology\\
		Department of Electronic Engineering, Tsinghua University, Beijing 100084, P.R. China\\
		$^\dag$School of Electronic and Information Engineering, Beijing Jiaotong University, Beijing 100044, China\\
		\{wang-lh19, sunjz18\}@mails.tsinghua.edu.cn, yxsun@bjtu.edu.cn, \{sheng.zhou, niuzhs\}@tsinghua.edu.cn}}

\maketitle

\begin{abstract}
We consider collaborative perception (CP) systems where a fusion center monitors various regions by multiple sources. The center has different age of information (AoI) constraints for different regions. Multi-view sensing data for a region generated by sources can be fused by the center for a reliable representation of the region. To ensure accurate perception, differences between generation time of asynchronous status updates for CP fusion should not exceed a certain threshold. An algorithm named scheduling for CP with asynchronous status updates (SCPA) is proposed to minimize the number of required channels and subject to AoI constraints with asynchronous status updates. SCPA first identifies a set of sources that can satisfy the constraints with minimum updating rates. It then chooses scheduling intervals and offsets for the sources such that the number of required channels is optimized. According to numerical results, the number of channels required by SCPA can reach only 12\% more than a derived lower bound. 
\end{abstract}

\section{Introduction}
The rapid development of wireless communication technologies enables information exchange among massive smart terminals, which is the cornerstone of future Internet of Things (IoT) applications such as autonomous driving (AD). Timely and reliable status information collected from terminals is often crucial to such smart applications. For example, an autonomous vehicle requires status information such as the current speed and location of nearby traffic participants to make decisions. To this end, road-side
infrastructure sensors are required to provide their sensing data. However, the vehicle may fail to obtain reliable information with the aid of a single sensor when there are objects out of its sensing range or occluded \cite{jia2023mass}. In this case, we need sensors to fulfill collaborative perception (CP). Data provided by the sensors will be fused at a fusion center and the multi-view perception results will then be shared with nearby vehicles \cite{Arnold2022coop}.  
One way is to let all nearby sensors transmit their data to the center. However, such an approach requires a large amount of wireless resources and is unnecessary. Ref.~\cite{jia2023mass,liu2020when2com,liu2020who2com} show that the multi-view perceptual results produced by a subset of nearby sensors are enough to help the vehicle drive safely. One reason is that sensing ranges of the sensors partially overlap. In addition, 
each fusion requires 
sensing data generated within a certain interval due to dynamic driving environments. In \cite{lei2022latency}, the authors claim that directly fusing asynchronous information generated at different time may lead to even worse performance than single-sensor perception. The authors then propose a fusion algorithm obtaining CP gain if the difference between generation time of data for fusion does not exceed a certain threshold. Meanwhile, since status information changes rapidly, the center also requires timely sensing data for the safety of the vehicle.  

In this paper, we introduce Age of Information (AoI), a widely used metric proposed in \cite{kaul2012real}, to measure information timeliness. AoI is defined as the time elapsed since the generation time of the latest received status information. Most existing work of AoI focus on minimizing average age or age-based functions \cite{yates2021age}, with limited attention paid to the harm of extreme AoI in time-critical applications such as AD \cite{abdel2019optimized}. Additionally, AoI requires adjustment such that we can employ the metric to measure performance of CP systems with asynchronous status updates. Therefore, we focus on a novel scheduling problem with AoI guarantee. In the model, there is a fusion center monitoring various regions by multiple sources. For reliable information of a region, the center may require status updates from several sources. We are interested in how to schedule sources such that the number of required channels is minimized while AoI of each region does not exceed a certain threshold. Such thresholds are hereinafter referred to as AoI constraints.

Ref.~\cite{li2022scheduling,li2021scheduling,liu2021aion,wang2022grouping,he2019joint} are most relevant work. In \cite{li2022scheduling}, the authors identify conditions for AoI constraints such that a feasible scheduling policy can always be constructed given a single reliable channel. In \cite{li2021scheduling}, the authors consider a problem similar to that in \cite{li2022scheduling} while the channel is assumed to be unreliable. Ref.~\cite{liu2021aion,wang2022grouping} try to figure out the minimum number of channels required under AoI constraints and the corresponding scheduling policies.  However, the above papers do not take collaboration among sources into consideration. In \cite{he2019joint}, the authors consider a wireless camera network with multi-view image processing and focus on minimizing the maximum peak age of all monitored scenes. The model is similar to ours but each image processing requires synchronous photos from all cameras monitoring the same scene. In our model, fusion algorithms can handle information generated at different time by only a part of sources.

The contributions of this paper are summarized as follows:

\begin{itemize}
	\item A novel scheduling problem under hard AoI constraints in CP systems with asynchronous status updates is formulated. We adjust AoI such that the metric can be used in the considered systems. This adjusted metric can be used in other systems where information for decision-making is derived based on asynchronous status updates from multiple sources. 
	\item In the model, status updates of each source can change AoI of multiple regions and the AoI of each region depends on several sources. 
	 A novel lower bound of the minimum number of required channels is derived based on the complicated topology of sources and regions.
	\item A special type of scheduling policy named homogeneous scheduling is considered. Under such policies, intervals between any two consecutive scheduling decisions of each source are the same.  
	We propose an algorithm named scheduling for CP with asynchronous status updates (SCPA) to construct such scheduling policies and optimize the number of required channels. The first step of SCPA is to identify a set of sources that can satisfy AoI constraints with the minimum updating rate. It then designs a policy for the sources based on a derived condition of scheduling intervals,  such that AoI constraints can be satisfied. 
	\item 
	According to numerical results, SCPA satisfies AoI constraints for the simulated instances. Additionally, the number of channels required by SCPA can reach 12\% more than the lower bound.
\end{itemize}

The rest of the paper is organized as follows. The system model is shown in Section II. Problem formulation and a derived lower bound of the number of required channels are presented in Section III. Details of SCPA are shown in Section IV. Finally, Section V shows the numerical results and Section VI concludes the paper.

\section{System Model}

\begin{figure}[t]
	
	\centering
	\includegraphics[scale=0.15]{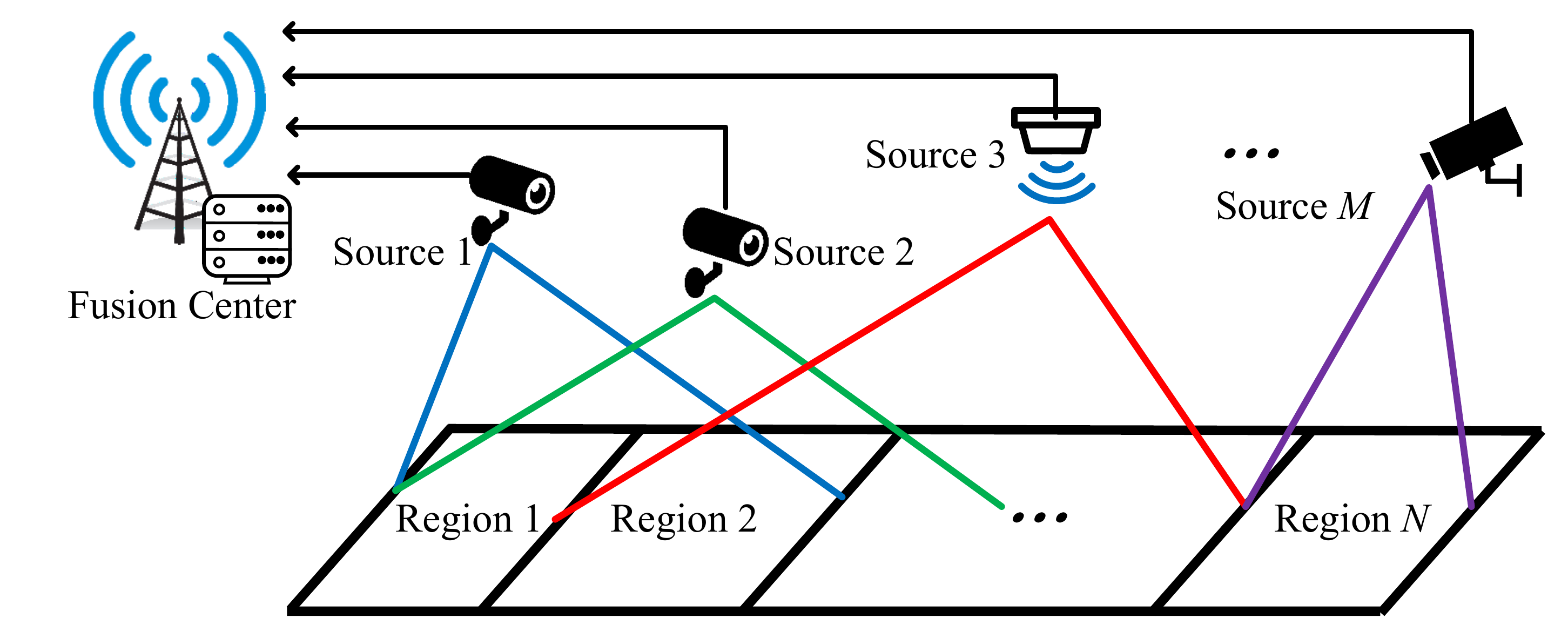}
	\vspace{-0.1in}
	\caption{System model.}
	\label{system}
\end{figure}

As shown in Fig.~\ref{system}, we consider a time-slotted CP system where there is a fusion center collecting information of $N$ regions. There are $M$ sources (sensors or cameras) transmitting their status updates to the center through $K$ reliable channels. The set of sources monitoring the $n$-th region is denoted by $\bm{\mathcal{K}}_n$, which will not change over time. Due to different positions and orientations of sources,
some of them may have a view of multiple adjacent regions. %More formally, there may be intersection between the sets of sources monitoring any two regions. 
Measurements performed by such sources will include information of several regions.

At the beginning of each time slot, the fusion center will schedule a set of sources to generate status information and transmit the updates to the center. Indicator $U_m(t)\in\{0,1\}$ denotes the scheduling decision of source $m$. If $U_m(t)=1$, the $m$-th source is scheduled at time $t$. Otherwise, the source remains idle. Each channel can let at most one source transmit at a time, then $\sum_{m=1}^MU_m(t)\leq K$ holds for all slots. The status information transmitted at $t$ is received at the end of the slot and all received information will be kept by the center.

Assume sources in each $\bm{\mathcal{K}}_n$ are heterogeneous in terms of the sensing range. 
We assume the center can reconstruct information of the region with a status update from any source in set $\bm{\mathcal{F}}_n\neq\emptyset$. 
In $\overline{\bm{\mathcal{F}}}_n=\bm{\mathcal{K}}_n\setminus\bm{\mathcal{F}}_n$, there exists $s_n\in\mathbb{Z}$ different combinations of sources whose status updates can be fused by the center for comprehensive information of the region, denoted by: $\{\bm{k}_{n,1},\bm{k}_{n,2},\cdots,\bm{k}_{n,s_n}\}$. Each combination includes at least 2 different sources, namely $|\bm{k}_{n,j}|\geq2,\forall j$. 
Meanwhile, $\bm{k}_{n,i}$ is not a subset of $\bm{k}_{n,j}$ 
 for any $i\neq j$.

	Fusion algorithms employed by the center can tolerate asynchronous status information from multiple sources to a certain extent. Assume that as long as the gap between the generation time of the input status updates does not exceed $T_n\in\mathbb{Z}^{+}$, the perception accuracy of the region can be satisfied. Otherwise, the fusion will fail. Additionally, we assume that the center possesses massive computing resources. Thus the time consumption of fusions can be ignored.

Age of information (AoI) is introduced to measure information timeliness. Let $A_n(t)\in\mathbb{Z}^{+}$ denote age of the $n$-th region. Since only successful fusions bring the center reliable information of the region, then:
\begin{equation}\label{aoi}
\begin{aligned}
	A_n(t+1)=
	\begin{cases}
		1, &  {\sum_{m\in\bm{\mathcal{F}}_n}U_m(t)\geq1}\\
		1, &  \text{There exists $j$ such that} \\ & \sum_{m\in \bm{k}_{n,j}} U_m(t)\geq1 \text{ and}\\ & t-\min_{m\in\bm{k}_{n,j}}g(m,t)\leq T_n\\
		A_n(t)+1,  & {\text{Otherwise,}}
	\end{cases} 
\end{aligned}
\end{equation}
where $g(m,t)$ is the generation time of the latest received update from source $m$ at slot $t$. In the first case of \eqref{aoi}, scheduling any source in $\bm{\mathcal{F}}_n$ at $t$ leads to a reduction of age in $t+1$. In the second case, a successful fusion of status updates generated by any combination $\bm{k}_{n,j}$ leads to an AoI reduction. We assume the center will start a fusion for $\bm{k}_{n,j}$ when at least a status update from $\bm{k}_{n,j}$ is received at the end of the last slot and the latest received status updates from all sources in $\bm{k}_{n,j}$ are generated in an interval of length $T_n$. 
Otherwise, information of region $n$ possessed by the center will age. 

Fig.~\ref{aoi_mosc} shows an example of the age evolution of region $1$ when $A_1(t)=1, \bm{\mathcal{F}}_1$ includes a single source $A$, $s_1=1$, $\bm{k_{1,1}}=\{B,C\}$ and $T_1=1$. We schedule source $A$ twice at time $2$ and $4$. $B$ is scheduled at time $3$ and $8$ while $C$ is scheduled at time $6$ and $9$. In the figure, we can see that each scheduling of source $A$ leads to AoI reduction in the next slot. Additionally, AoI reduction will occur when a fusion is done successfully and $A_1(10)=1$. $A_1(7)>1, A_1(9)>1$ because the latest status updates from $\{B,C\}$ do not satisfy the conditions of the second case in \eqref{aoi}.

For the information of region $n$, the center requires that its age does not exceed $d_n\in\mathbb{Z}^{+}$ over time, namely $A_n(t)\leq d_n, \forall t$. Since it is not reasonable to use too outdated status information for fusion, we assume $T_n<d_n$. Without loss of generality, the regions are sorted such that $d_1\leq d_2 \leq d_3 \cdots\leq d_N$. We are interested in the minimum number of channels to satisfy AoI constraints and a corresponding scheduling policy $\bm{\pi}$. In this paper, a certain scheduling policy is a matrix with $T$ rows and $M$ columns, where $T$ denotes the interval of interest and $T\to\infty$ in this work. The $t$-th row of $\bm{\pi}$ includes scheduling decisions of all sources, namely $\bm{\pi}(t)=[U_1(t),\cdots,U_m(t), \cdots, U_M(t)]$.

\begin{figure}[t]
	\centering
	\includegraphics[scale=0.35]{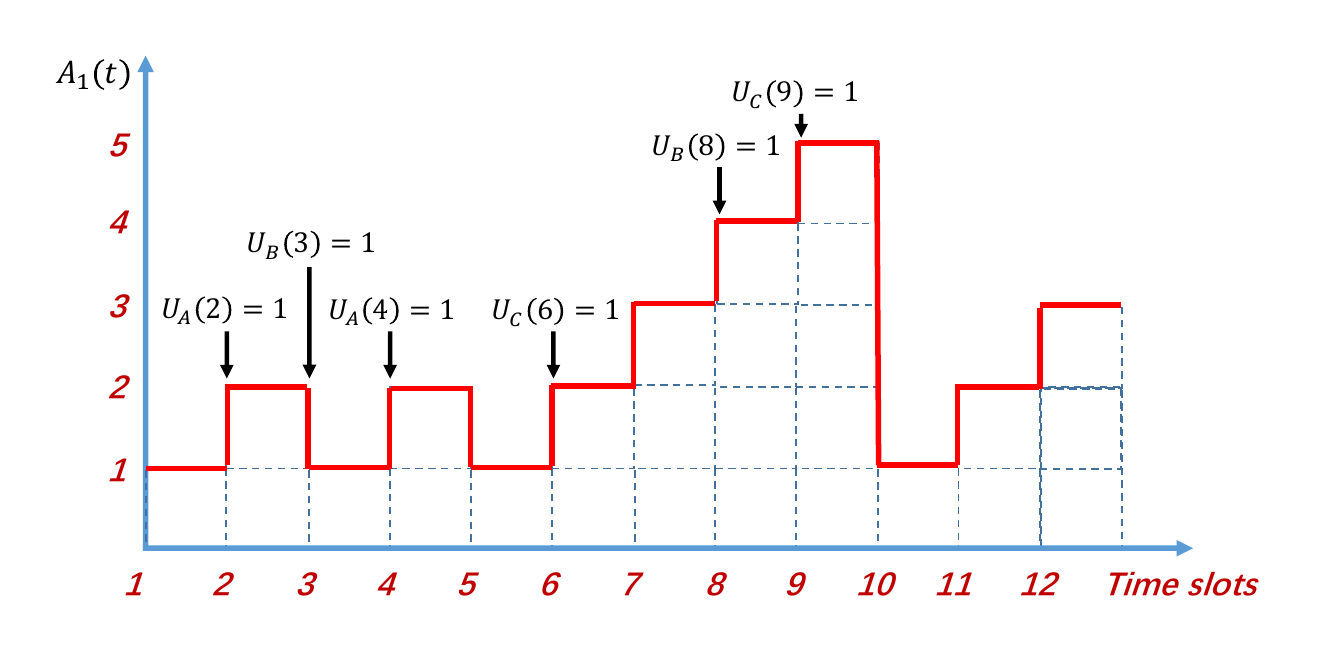}
	\vspace{-0.2in}
	\caption{An example of evolution of AoI.}
	\label{aoi_mosc}
\end{figure}

 \addtolength{\topmargin}{+0.069cm}
\section{Problem Formulation and Analysis}

The problem can be formulated as follow: 
 \begin{equation}\label{problem}
	\begin{split}
		\min_{\bm{\pi}} \,\,& K,\\
		s.t.\,\,& A_n(t)\leq d_n,\forall n=1,\cdots,N, \forall t=1,2,\cdots.\\
	\end{split}
\end{equation}

A lower bound of this problem is presented in the following lemma. 

\newtheorem{lemma}{Lemma}
\begin{lemma} \label{lbub}
	Given AoI constraints $\bm{d} =
	[d_1, d_2, \cdots ,d_N]$, network setup $\{\bm{\mathcal{K}}_1, \cdots, \bm{\mathcal{K}}_N\}$ and requirements for successful fusion $\{T_1,\cdots,T_N\}$, we have 
	\begin{equation}
		K^{\star}\geq\left\lceil\sum_{m=1}^{M}{l_m^{\star}}\right\rceil,
	\end{equation}
	where $K^{\star}$ is the optimal number of required channels, $\bm{l}^{\star}=\{l_1^{\star},\cdots,l_M^{\star}\}$ is the solution to the following problem.
	\begin{equation}\label{lb}
		\begin{split}
			\min_{\bm{l}} & \left\lceil{ \sum_{m=1}^{M}l_m}\right\rceil
			\\
			s.t.&  \left\{\begin{array}{lc}
				\sum_{j=1}^{s_n}\sum_{m\in\bm{k}_{n,j}}l_m+\sum_{m\in\bm{\mathcal{F}}_n}l_m\geq \frac{1}{d_n}, \forall n\\
				0\leq l_m\leq1, \forall m\\
			\end{array}\right.
		\end{split}
	\end{equation}
\end{lemma}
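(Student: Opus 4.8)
The plan is to show that the long-run scheduling rates induced by \emph{any} feasible policy constitute a feasible point of the linear program \eqref{lb}; since the program minimizes $\lceil\sum_m l_m\rceil$, its optimum then lower-bounds this quantity, while the channel-capacity constraint ties the same quantity to $K$. Concretely, I would fix a feasible policy $\bm{\pi}$ and examine, for each source $m$, the time-average scheduling frequency $\frac{1}{T}\sum_{t=1}^{T}U_m(t)$. These $M$ quantities lie in $[0,1]$, so by compactness there is a subsequence $T_k\to\infty$ along which all of them converge; write $l_m$ for the limits. By construction $0\le l_m\le 1$, the box constraint of \eqref{lb}. Summing the per-slot capacity constraint $\sum_{m=1}^{M}U_m(t)\le K$ over $t$ and passing to the limit gives $\sum_{m=1}^{M}l_m\le K$, and as $K$ is an integer, $K\ge\lceil\sum_{m}l_m\rceil$.

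The remaining task is the coverage constraint, which I would obtain by converting the hard AoI requirement into a minimum \emph{refresh rate}. Call slot $t$ a refresh of region $n$ when $A_n(t+1)=1$. By \eqref{aoi} the age grows by one each slot between consecutive refreshes, so two refreshes can be at most $d_n$ slots apart under $A_n(t)\le d_n$; hence region $n$ refreshes at least $T/d_n-O(1)$ times in any horizon of length $T$, a rate of at least $1/d_n$. The crux is to charge each refresh to a scheduling action. Inspecting the two non-trivial cases of \eqref{aoi}, a refresh at $t$ forces either $\sum_{m\in\bm{\mathcal{F}}_n}U_m(t)\ge1$ (Case 1) or $\sum_{m\in\bm{k}_{n,j}}U_m(t)\ge1$ for some $j$ (Case 2); in both, the refresh indicator is dominated by $\sum_{m\in\bm{\mathcal{F}}_n}U_m(t)+\sum_{j=1}^{s_n}\sum_{m\in\bm{k}_{n,j}}U_m(t)$, any double counting from shared sources only loosening the bound. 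Summing over $t$, dividing by $T$, and passing to the limit along $T_k$ yields precisely $\sum_{j=1}^{s_n}\sum_{m\in\bm{k}_{n,j}}l_m+\sum_{m\in\bm{\mathcal{F}}_n}l_m\ge 1/d_n$ for every $n$.

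With $\{l_m\}$ feasible for \eqref{lb}, its objective $\lceil\sum_m l_m\rceil$ is at least the optimal value $\lceil\sum_m l_m^\star\rceil$, and combining this with $K\ge\lceil\sum_m l_m\rceil$ gives $K\ge\lceil\sum_m l_m^\star\rceil$ for every feasible policy, hence for the optimum $K^\star$. I expect the main obstacle to be the refresh-to-scheduling reduction in Case 2: one must read off from the semantics of \eqref{aoi} that a successful fusion cannot reset the age unless some source in a combination $\bm{k}_{n,j}$ is actually scheduled in that very slot, so that the refresh genuinely consumes scheduling rate; the tolerance window $T_n$ never enters the count and is simply dropped for the lower bound. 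A lesser technical point is the simultaneous convergence along a subsequence, ensuring all limits $l_m$ and all $N$ constraints hold together.
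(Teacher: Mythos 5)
Your proposal is correct, and it follows the same underlying strategy as the paper---convert the hard AoI deadlines into per-region long-run rate constraints, observe that any feasible policy induces a feasible point of \eqref{lb}, and tie $\sum_m l_m$ to $K$---but your execution is genuinely tighter where the paper is loose. The paper argues two \emph{pure} cases separately (serving region $n$ only through $\bm{\mathcal{F}}_n$, giving $\sum_{m\in\bm{\mathcal{F}}_n}l_m\geq 1/d_n$, or only through a single combination $\bm{k}_{n,j}$, giving $\sum_{m\in\bm{k}_{n,j}}l_m\geq 1/d_n$) and then simply asserts the combined constraint \eqref{lb2}; it never justifies that inequality for policies that alternate over time between direct updates and different fusion combinations, it tacitly assumes the Ces\`aro limits $l_m=\lim_{T\to\infty}\frac{1}{T}\sum_{t=1}^{T}U_m(t)$ exist, and it concludes $K\geq\lceil\sum_m l_m\rceil$ only via an informal resource-counting remark. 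Your version repairs all three points: the refresh-indicator domination (a reset of $A_n$ at slot $t$ forces $\sum_{m\in\bm{\mathcal{F}}_n}U_m(t)+\sum_{j=1}^{s_n}\sum_{m\in\bm{k}_{n,j}}U_m(t)\geq 1$, with double counting only loosening the bound) covers arbitrary mixed policies in one stroke; the subsequence extraction by compactness guarantees all $M$ limits and all $N$ constraints hold simultaneously without assuming convergence; and summing the per-slot capacity constraint $\sum_{m=1}^{M}U_m(t)\leq K$ and using integrality of $K$ makes the final step $K\geq\lceil\sum_m l_m\rceil$ rigorous. The two observations you flag---that a Case-2 reset requires some source of the combination to be scheduled in that very slot (so each refresh genuinely consumes scheduling rate), and that the tolerance condition involving $T_n$ may simply be dropped because discarding a conjunct of the reset condition only enlarges the dominating sum---are exactly the semantic points the paper leaves implicit; making them explicit is what turns the paper's sketch into a complete proof.
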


\begin{proof}
	To prove this lemma, we first find out the minimum channel resources required by region $n$. If we only employ status updates from sources in $\bm{\mathcal{F}}_n$ to satisfy $d_n$, then there should be at least one transmission from any source in $\bm{\mathcal{F}}_n$ within $d_n$ consecutive time slots. Let  $l_m=\lim_{T\to\infty}\frac{1}{T}\sum_{t=1}^TU_m(t)\leq1$ denote the updating rate of source $m$. Therefore, $\sum_{m\in\bm{\mathcal{F}}_n}l_m\geq\frac{1}{d_n}$, which means at least $\frac{1}{d_n}$ of the time slots of a channel is required by the region in this case.
	
	Next, we discuss the minimum required channel resources if we only schedule sources in $\bm{{k}}_{n,j}$ to satisfy $d_n$. In this case, scheduling any source in $\bm{{k}}_{n,j}$ may lead to AoI reduction in the next slot because the fusion algorithm can handle status updates generated at different time. Therefore, $\sum_{m\in\bm{k}_{n,j}}l_m\geq\frac{1}{d_n}$. Since there are $s_n$ combinations of source leading to successful fusions, we require:  
	\begin{equation}\label{lb2}
		\sum_{j=1}^{s_n}\sum_{m\in\bm{k}_{n,j}}l_m+\sum_{m\in\bm{\mathcal{F}}_n}l_m\geq \frac{1}{d_n}, \forall n,
	\end{equation}
which is the constraint in \eqref{lb}. Since $l_m$ denotes the minimum channel resources that should be assigned to source $m$, $\lceil{ \sum_{m=1}^{M}l_m}\rceil$ is the lower bound of channels required by $\bm{l}=\{l_1,\cdots,l_M\}$. Thus the proof is completed.  
\end{proof}

 A challenge of solving \eqref{problem} lies in determining the infinite elements of $\bm{\pi}$. Therefore, we focus on a special type of scheduling policy named homogeneous scheduling. The definition is shown below. 

\newtheorem{definition}{Definition}
\begin{definition} \label{cyclic scheduler}
	A scheduling policy is homogeneous if and only if all the sources are scheduled with the same interval. 
\end{definition}

Admittedly, considering only such homogeneous scheduling policies may lead to sub-optimal solutions. But it does make it easier for us to solve the problem, and numerical results in section V validate this simplification. In this case, we only need to determine scheduling interval and offset, denoted by positive integers $c_m$ and $o_m$, for each source. Here, $o_m$ is the moment when source $m$ is scheduled for the first time and we assume $o_m\leq c_m$. According to the definition, $o_m+ic_m$ is the moment when source $m$ is scheduled for the $(i+1)$-th time, $i\in\mathbb{N}$.

\newtheorem{thm}{Theorem}

\section{Scheduling for CP with Asynchronous \\ status updates (SCPA)}

To solve eq.~\eqref{problem}, an algorithm named scheduling for CP with asynchronous status updates (SCPA) is proposed. SCPA includes two steps:
\begin{itemize}
\item determine active sources which can satisfy AoI constraints with the minimum updating rates.
\item establish a homogeneous scheduling policy for the sources by carefully choosing scheduling intervals and offsets.
\end{itemize}

Details of SCPA are shown in the following subsections.

\subsection{Determining Active Sources}

In our model, the sensing range of several sources may partially overlap, and thus there may be no need to activate and schedule all sources. To optimize the number of required channels, we first determine active sources which can satisfy AoI constraints with the minimum updating rates. 
For the simplicity of constructing a corresponding scheduling policy, we only activate one source in $\bm{\mathcal{F}}_n$ or a combination of sources for each region. Additionally, we require updating rates of sources activated for region $n$ are no less than $\frac{1}{d_n}$ to ensure $A_n(t)\leq d_n$. 
Based on the analysis above, active sources can be determined by solving the following problem:
\begin{equation}\label{active source}
	\begin{split}
		\min_{\bm{l},\bm{Q}} & { \sum_{m=1}^{M}{l_m}}
		\\
		s.t.&  \left\{\begin{array}{lc}
			\sum_{m\in\bm{\mathcal{F}}_n}Q_{m,n}+\\\quad\quad\sum_{j=1}^{s_n}\mathbb{I}(\min_{m\in\bm{k}_{n,j}}Q_{m,n}\geq1)=1, \forall n\\
						\frac{Q_{m,n}}{d_n}\leq l_m\leq1, \forall m,\\
		\end{array}\right.
	\end{split}
\end{equation}
where indicator function $\mathbb{I}(x\geq y)=1$ if $x\geq y$ holds, otherwise $\mathbb{I}(x\geq  y)=0$.  
 $\bm{Q}$ is a matrix with $M$ rows and $N$ columns. The value of all elements in $\bm{Q}$ can only be $0$ or $1$. $Q_{m,n}$ denotes the $n$-th element on the $m$-th row of $\bm{Q}$. If $Q_{m,n}=1$, $l_m\geq\frac{1}{d_n}$. If $Q_{m,n}=0$, $l_m$ is independent with $d_n$. Let $\bm{\mathcal{M}}_{n}$ and $\bm{l}^{\prime}=\{l_1^{\prime},\cdots,l_M^{\prime}\}$ denote sources activated for region $n$ and a solution to \eqref{active source}, respectively. If source $m$ belongs to $\bm{\mathcal{M}}_{n}$, then $Q_{m,n}=1, l_m^{\prime}>0$ and $m\in\bm{\mathcal{F}}_n$ or $m$ belongs to the set $\bm{k}_{n,j}$ such that $\min_{m\in\bm{k}_{n,j}}Q_{m,n}\geq1$. The maximum scheduling interval of active source $m$ is $\frac{1}{l_m^{\prime}}=\min_{n\in\bm{\mathcal{N}}_m}d_n$, where $\bm{\mathcal{N}}_m=\{n|m\in\bm{\mathcal{M}}_{n}\}$.

\subsection{Conditions for Policies Satisfying AoI Constraints}

After determining active sources and their maximum scheduling intervals, the next step is to determine $o_m$ and $c_m$ for each source such that the AoI constraints are satisfied. First of all, if $\sum_{m=1}^MQ_{m,n}=1$ holds for all regions, there is no collaboration between the active sources. Thus policies   scheduling each active source at intervals less than or equal to  $\frac{1}{l_m^{\prime}}$ time slots can satisfy the constraints. Construction of such a scheduling policy has already been studied in \cite{liu2021aion,wang2022grouping} and a near-optimal scheduling policy named two-step grouping algorithm (TGA) is proposed in \cite{wang2022grouping}.

If we activate a combination of sources, denoted by $\bm{k}_{n,j}$, for region $n$, then we have $\frac{1}{l_{z_n}^{\prime}}\leq d_n$. Here $z_n$ is the source with the lowest updating rate $l_{z_n}^{\prime}$ in $\bm{k}_{n,j}$. Since we focus on homogeneous scheduling, then one way to minimize the required number of channels is scheduling sources in $\bm{k}_{n,j}$ with the maximum interval, namely letting $c_{m}=\frac{1}{l_{m}^{\prime}}(\forall m\in\bm{k}_{n,j})$. Given scheduling intervals, we can satisfy AoI constraints $d_n$ by carefully designing offsets $o_m$ such that every scheduling of source $z_n$ leads to an AoI reduction in the next slot. This is because the interval between any two consecutive AoI reductions will not exceed $c_{z_n}$.

Since each scheduling of $z_n$ leads to AoI reduction in the next slot, source $z_n$ should possess the largest offsets over $\bm{k}_{n,j}$. Meanwhile, the required offsets of sources in $\bm{k}_{n,j}$ should satisfy the following inequalities for each $i\in\mathbb{N}$:
\begin{equation} \label{offsets_general scheduling interval}
 \tau_i-o_{m}\;\mathrm{mod}\; c_{m}\leq T_n, \forall m\in\bm{k}_{n,j}\setminus z_n,
\end{equation} 
where function $a\;\mathrm{mod}\; b$ equals the remainder when $a$ is divided by $b$ and $\tau_i=ic_{z_n}+o_{z_n}$ is the moment when $z_n$ is scheduled for the $(i+1)$-th time. Then the last time $m$ was scheduled before $\tau_i$ is $\lfloor\frac{\tau_i-o_m}{c_m}\rfloor c_m+o_m$. Therefore, LHS of \eqref{offsets_general scheduling interval} is the gap between $\lfloor\frac{\tau_i-o_m}{c_m}\rfloor c_m+o_m$ and $\tau_i$. If the gap does not exceed $T_n$ for all sources in $\bm{k}_{n,j}\setminus z_n$ and all scheduling indices $i$, $A_n(\tau_i+1)=1$.

Note that there may not be offsets such that \eqref{offsets_general scheduling interval} holds. For example, consider region $n$ with $d_n=4$ and the combination of sources activated for the region is $\{A,B\}$. If $c_A=4$, $c_B=3$ and $T_n=1$, then no matter how we choose $o_A$ and $o_B$, the interval between two consecutive AoI reductions will be more than $d_n$ time slots. The following lemma presents conditions under which there exist offsets of $\bm{k}_{n,j}$ such that \eqref{offsets_general scheduling interval} hold.

\begin{lemma} \label{condition_offset}
	There exist $o_m(m\in\bm{k}_{n,j})$ such that \eqref{offsets_general scheduling interval} holds if and only if $T_n\geq\max_{m\in\bm{k}_{n,j}\setminus z_n}\{c_m-\gcd(c_m,c_{z_n})\}$, where $\gcd(c_m,c_{z_n})$ is the greatest common divisor of $c_m$ and $c_{z_n}$.
\end{lemma}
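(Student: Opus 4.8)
The plan is to reduce the existence question for the offsets to a purely number-theoretic statement about which residues modulo $c_m$ are attainable as the scheduling index $i$ ranges over $\mathbb{N}$, and then to optimize the offset within each residue class. I would prove both directions of the equivalence, since the decoupling across sources is what makes this clean.

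First I would rewrite the left-hand side of \eqref{offsets_general scheduling interval} explicitly. Substituting $\tau_i = ic_{z_n}+o_{z_n}$ and setting $\delta_m := o_{z_n}-o_m$, the quantity $(\tau_i-o_m)\;\mathrm{mod}\;c_m$ becomes $(ic_{z_n}+\delta_m)\;\mathrm{mod}\;c_m$. Since $o_{z_n}$ is fixed while each $o_m$ (for $m\neq z_n$) may be chosen freely over a full period $\{1,\dots,c_m\}$, offsets satisfying \eqref{offsets_general scheduling interval} exist if and only if, for every $m\in\bm{k}_{n,j}\setminus z_n$, some residue $\delta_m$ achieves
\[
\max_{i\in\mathbb{N}}\big[(ic_{z_n}+\delta_m)\;\mathrm{mod}\;c_m\big]\leq T_n.
\]
Because the constraints decouple across $m$, it suffices to analyze one source $m$ at a time.

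The core step is the characterization of the attainable set $S:=\{ic_{z_n}\;\mathrm{mod}\;c_m : i\in\mathbb{N}\}$. Writing $g_m:=\gcd(c_m,c_{z_n})$, I claim $S=\{0,g_m,2g_m,\dots,c_m-g_m\}$, i.e. exactly the nonnegative multiples of $g_m$ below $c_m$. This is the standard fact that the cyclic subgroup generated by $c_{z_n}$ in $\mathbb{Z}/c_m\mathbb{Z}$ equals $g_m\mathbb{Z}/c_m\mathbb{Z}$: one inclusion holds because $g_m\mid c_{z_n}$, and the reverse follows from B\'ezout's identity, which yields an $i$ with $ic_{z_n}\equiv g_m\pmod{c_m}$. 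Consequently the shifted set $\{(ic_{z_n}+\delta_m)\;\mathrm{mod}\;c_m : i\in\mathbb{N}\}$ consists of all residues congruent to $\delta_m$ modulo $g_m$; writing $r:=\delta_m\;\mathrm{mod}\;g_m\in\{0,\dots,g_m-1\}$, this set is $\{r,r+g_m,\dots,r+c_m-g_m\}$, whose maximum is $r+c_m-g_m$. With this in hand both directions are immediate. For sufficiency, if $T_n\geq c_m-g_m$ for all $m$, choose each $o_m$ so that $\delta_m\equiv0\pmod{g_m}$ (possible since $o_m$ sweeps a full period), giving $r=0$ and worst-case gap $c_m-g_m\leq T_n$; then \eqref{offsets_general scheduling interval} holds for every $i$, and by the argument preceding the lemma each scheduling of $z_n$ forces an AoI reduction. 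For necessity, for any offset the worst-case gap is $r+c_m-g_m\geq c_m-g_m$, so if some $m$ has $c_m-g_m>T_n$ then no offset can meet the bound for that $m$. Taking the maximum over $m\in\bm{k}_{n,j}\setminus z_n$ gives the stated condition $T_n\geq\max_{m\in\bm{k}_{n,j}\setminus z_n}\{c_m-\gcd(c_m,c_{z_n})\}$.

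I expect the main obstacle to be the number-theoretic characterization of $S$ and the resulting coset, since everything else is bookkeeping around it. Particular care is needed to confirm that $o_m$ ranging over a single period $\{1,\dots,c_m\}$ indeed realizes every residue class of $\delta_m$ modulo $g_m$ (and in particular the class $r=0$), and that the ``largest element of the coset'' computation correctly identifies $r+c_m-g_m$ as the worst case rather than some intermediate value.
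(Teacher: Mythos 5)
Your proposal is correct and takes essentially the same approach as the paper: both reduce \eqref{offsets_general scheduling interval} to the fact that $\{(ic_{z_n}+o_{z_n}-o_{m})\;\mathrm{mod}\;c_m : i\in\mathbb{N}\}$ is exactly the arithmetic progression $\{r, r+\gcd(c_m,c_{z_n}),\dots, c_m-\gcd(c_m,c_{z_n})+r\}$ with $r=(o_{z_n}-o_m)\;\mathrm{mod}\;\gcd(c_m,c_{z_n})$, whose maximum $c_m-\gcd(c_m,c_{z_n})+r$ is then minimized by choosing offsets with $r=0$. The only difference is cosmetic: you prove the coset characterization via B\'ezout/cyclic-subgroup reasoning, while the paper derives it from a mod-$\gcd$ invariance identity together with a linear-Diophantine counting argument showing all $c_m/\gcd(c_m,c_{z_n})$ values are attained.
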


\begin{proof}
First of all, $(\tau_i-o_{m}\;\mathrm{mod}\; c_{m})\leq c_m-1$ holds due to the definition of remainder.  
Since $c_m$ is divisible by $\gcd(c_{z_n},c_m)$, then we have $(a\;\mathrm{mod}\;c_m)\;\mathrm{mod}\;\gcd(c_{z_n},c_m)=a\;\mathrm{mod}\;\gcd(c_{z_n},c_m)$. Thus:
\begin{equation}\label{mod}
	\begin{aligned}
		&(\tau_i-o_{m}\;\mathrm{mod}\; c_{m})\;\mathrm{mod}\;\gcd(c_{z_n},c_m)\\ =&(ic_{z_n}+o_{z_n}-o_{m})\;\mathrm{mod}\; \gcd(c_{z_n},c_m)\\
		\overset{(a)}{=}&(ic_{z_n}\;\mathrm{mod}\;\gcd(c_{z_n},c_m) +\\&(o_{{z_n}}-o_m)\;\mathrm{mod}\;\gcd(c_{z_n},c_m) )\;\mathrm{mod}\; \gcd(c_{z_n},c_m)\\
		=&(o_{{z_n}}-o_m)\;\mathrm{mod}\;\gcd(c_{z_n},c_m),
	\end{aligned}
\end{equation}
where equality (a) holds because $(a+b)\;\mathrm{mod}\; n=(a\;\mathrm{mod}\;n +b\;\mathrm{mod}\;n )\;\mathrm{mod}\; n$. 
Due to eq.~\eqref{mod} and the fact that $(\tau_i-o_{m}\;\mathrm{mod}\; c_{m})$ can not exceed $c_m-1$, the maximum achievable value of $(\tau_i-o_{m}\;\mathrm{mod}\; c_{m})$ is $ c_m-\gcd(c_{z_n},c_m)+(o_{z_n}-o_m)\;\mathrm{mod}\; \gcd(c_{z_n},c_m)$.   
We then prove that $\tau_i-o_m\;\mathrm{mod}\;c_m$ can actually achieve $c_m-\gcd(c_{z_n},c_m)+(o_{z_n}-o_m)\;\mathrm{mod}\; \gcd(c_{z_n},c_m).$ Let $a_i=(\tau_i-o_{m}\;\mathrm{mod}\; c_{m})$. Then there exists $y_i$ such that $ic_{z_n}-y_ic_{m}=a_i-(o_{z_n}-o_m)$. If $a_j$ has the same value as $a_i$, then the absolute value of $j-i$ is divisible by $\frac{c_m}{\gcd(c_{z_n},c_m)}$ according to linear Diophantine equations. Therefore, for any $i\in\mathbb{N}$, elements in the set $\{a_i,a_{i+1},\cdots,a_{i+\frac{c_m}{\gcd(c_{z_n},c_m)}-1}\}$ have different values. Since eq.~\eqref{mod} holds for all $i$ and $a_i\leq c_m-1$, $a_i$ has $\frac{c_m}{\gcd(c_{z_n},c_m)}$ different values, denoted by set $\{(o_{z_n}-o_m)\;\mathrm{mod}\; \gcd(c_{z_n},c_m), (o_{z_n}-o_m)\;\mathrm{mod}\; \gcd(c_{z_n},c_m)+\gcd(c_{z_n},c_m),\cdots,c_m-\gcd(c_{z_n},c_m)+(o_{z_n}-o_m)\;\mathrm{mod}\; \gcd(c_{z_n},c_m)\}$. Thus there must exists $i$ such that $a_i=c_m-\gcd(c_{z_n},c_m)+(o_{z_n}-o_m)\;\mathrm{mod}\; \gcd(c_{z_n},c_m)$. Since $(o_{z_n}-o_m)\;\mathrm{mod}\; \gcd(c_{z_n},c_m)\geq0$, $T_n\geq c_m-\gcd(c_m,c_{z_n}$ can guarantee the existence of a pair of $o_m,o_{z_n}$ satisfying $\tau_i-o_{m}\;\mathrm{mod}\; c_{m}\leq T_n$, $\forall i$. To ensure there exist offsets $o_m(m\in\bm{k}_{n,j})$ such that \eqref{offsets_general scheduling interval} holds, $T_n$ should be no less than $\max_{m\in\bm{k}_{n,j}\setminus z_n}\{c_m-\gcd(c_m,c_{z_n})\}$.
\end{proof}

\subsection{Basis for Determining Scheduling Intervals}

Lemma \ref{condition_offset} tells us how to construct scheduling policies given relatively large $T_n$. However, if $T_n$ is small, directly setting $c_m=\frac{1}{l_m^{\prime}}$ may not be able to satisfy AoI constraints. In this case, we have to adjust the scheduling intervals such that conditions in Lemma \ref{condition_offset} can be satisfied. More notably, setting $c_m$ as the maximum value $\frac{1}{l_m^{\prime}}$ may not be a good choice even from the perspective of reducing the number of required channels. The following example shows that adjusting scheduling intervals can not only help us to satisfy the constraints but also reduce the required wireless resources. 

\begin{Example} \label{example 2}
	Consider region $n$ with $d_n=4$ and $T_n=2$. Active source of the region are $\bm{\mathcal{M}}_n=\{A,B\}$ and $\frac{1}{l_{A}^{\prime}}=4$, $\frac{1}{l_{B}^{\prime}}=3$. If $c_A=4$, $c_B=3$, $o_A$, $o_B$ satisfying \eqref{offsets_general scheduling interval} exist according to Lemma \ref{condition_offset}. However, no matter how we choose $o_A$, $o_B$, at least $2$ channels are required. When $o_A=2$, $o_B=1$, the scheduling policy is: 
	\vspace{-0.15in}
	\begin{center}
		\setlength{\tabcolsep}{1.3mm}{
			\begin{tabular}{l c c c c c c c c c c c c c} 		% 左对齐，居中对齐，居中对齐，居中对齐，右对齐
				time slot&1&2&3&4&5&6&7&8&9&10&11&12\\
				channel 1 & B & $\square$ & $\square$ & B& $\square$ & $\square$ & B & $\square$ & $\square$ & B & $\square$ & $\square$ & $\cdots$	\\
				channel 2 &$\square$&A&$\square$ & $\square$ & $\square$ &A& $\square$ & $\square$ & $\square$ & A & $\square$ & $\square$& $\cdots$\\
		\end{tabular}}
	\end{center}
	If $c_B=2$, we only need a single channel. When $o_A=2$, $o_B=1$, the corresponding scheduling policy is shown below.
\begin{center}
	\setlength{\tabcolsep}{1.3mm}{
		\begin{tabular}{l c c c c c c c c c c c c c} 		
			time slot&1&2&3&4&5&6&7&8&9&10&11&12\\
			channel 1 & B & A & B & $\square$& B & A & B & $\square$ & B & A & B & $\square$ & $\cdots$	\\
	\end{tabular}}
\end{center}
\end{Example}

Example 1 can be explained as follows. According to the proof of Lemma \ref{condition_offset}, if $\gcd(c_{z_n},c_{m})=1$, then no matter how we choose $o_m$ and $o_{z_n}$, there exists $t$ such that $U_m(t)+U_{z_n}(t)=2$. This is because $\tau_i-o_{m}\;\mathrm{mod}\; c_{m}$ can always be $0$ no matter how we choose offsets. Thus we should avoid co-prime scheduling intervals for active sources. On the other hand, to satisfy AoI constraints with any $T_n>0$, $c_m$ and $c_{z_n}$ should be reduced to $c_m^{\prime}$ and $c_{z_n}^{\prime}$, such that $c_m^{\prime}$ is a divisor of $c_{z_n}^{\prime}$ for any $m\in\bm{k}_{n,j}\setminus z_n$ and  $c_m^{\prime}-\gcd(c_m^{\prime},c_{z_n}^{\prime})=0$.  

Since we need $c_{z_n}^{\prime}$ divisible by $c_{m}^{\prime}$ for sources in each activated $\bm{k}_{n,j}$, scheduling intervals of sources in each $\bm{\mathcal{V}}^{(i)} (0<i\leq b_g)$ should be designed jointly. Here, $\bm{\mathcal{V}}^{(i)}$ is the set of all vertices of $\bm{\mathcal{G}}^{(i)}$, which is the $i$-th (connected) component of undirected graph $\bm{\mathcal{G}}=\{\bm{\mathcal{V}},\bm{\mathcal{E}}\}$. $b_g\in\mathbb{Z}^{+}$ is the number of components of $\bm{\mathcal{G}}$. Set $\bm{\mathcal{V}}$ includes all activated sources, and the $i$-th vertex represents the $i$-th active source. Set $\bm{\mathcal{E}}$ includes all edges in the graph. If there is an edge between the $i$-th and $j$-th vertices, then the two corresponding sources are activated for at least a same region. The definition of component is:

\begin{definition} \label{component}
A component of an undirected graph is a subgraph of the graph. Any two vertices in a component can reach another via edges of the component, and a component has no edge connected to vertices out of the component.
\end{definition}

To ensure that $c_{z_n}^{\prime}$ is divisible by $c_{m}^{\prime}$ for sources in each activated $\bm{k}_{n,j}$, we require scheduling intervals of sources in each $\bm{\mathcal{V}}^{(i)}$ to be consecutively divisible (CD).  
\begin{definition} \label{CD}
	$\bm{x}=[x_1,x_2,\cdots,x_N]$ is consecutively divisible (CD) if $x_i\in \mathbb{Z}^{+}$, $\forall i\in\{1,\cdots,N\}$ and $x_i/x_{i-1}\in\mathbb{Z}^{+},\forall i\in\{2,\cdots,N\}$. The base of a CD vector is the minimum value of elements in $\bm{x}$.
\end{definition}

For each activated $\bm{k}_{n,j}$, there may exist more than one choice of CD scheduling intervals.  
The following example may shed light on how to choose CD intervals for minimizing $K$. 

\begin{Example} \label{example 2}
	Consider 5 regions with AoI constraints $[4,9,9,5,6]$. $T_n=1, \forall n=\{1,2,3\}$ and $T_4=T_5=2$. Sources activated for each region and the maximum scheduling intervals are: 1) $\bm{\mathcal{M}}_1=\{A,B\},\max_{m\in\bm{\mathcal{M}}_1}c_m\leq4$; 2) $\bm{\mathcal{M}}_2=\{C,D\},\max_{m\in\bm{\mathcal{M}}_2}c_m\leq9$; 3) $\bm{\mathcal{M}}_3=\{E\},c_E\leq9$; 4) $\bm{\mathcal{M}}_4=\{F,G,H\},\max_{m\in\bm{\mathcal{M}}_4}c_m\leq5$; 5) $\bm{\mathcal{M}}_5=\{H,I,J\},\max_{m\in\bm{\mathcal{M}}_5} c_m\leq6$. 

   $\bm{\mathcal{G}}$ has $4$ components: $\bm{\mathcal{V}}^{(1)}=\{A,B\}$, $\bm{\mathcal{V}}^{(2)}=\{C,D\}$, $\bm{\mathcal{V}}^{(3)}=\{E\}$, $\bm{\mathcal{V}}^{(4)}=\{F,G,H,I,J\}$.
 Any homogeneous policy satisfying the above constraints requires at least 2 channels. The following scheduling policy can reach the lower bound.
	
	\vspace{-0.06in}
	\begin{center}
		\setlength{\tabcolsep}{1.3mm}{
			\begin{tabular}{l c c c c c c c c c c c c c} 		
				time slot&1&2&3&4&5&6&7&8&9&10&11&12\\
				channel 1 & A & B & C & D& A & B & E & $\square$ & A & B & C & D & $\cdots$	\\
				channel 2 &F&G&H & I & J &F& G & H & I & J & F & G & $\cdots$\\
		\end{tabular}}
	\end{center}
    Under the policy, $c_A=c_B=4$, $c_C=c_D=c_E=8$, $c_F=c_G=c_H=c_I=c_J=5$. 
\end{Example}

 There are two points worth noting in Example 2: 
 \begin{itemize}
 \item The policy chooses not to assign CD scheduling intervals to each $\bm{\mathcal{V}}^{(i)}$ independently. For example, $c_C, c_D, c_E$ decreases from $9$ to $8$ for a CD structure with $c_A,c_B$. This is because we should avoid co-prime scheduling intervals. Note that the policy chooses not to decrease $c_F, c_G, c_H$ from 5 to 4 for a CD structure. This is because such an operation leads to a relatively great increase in the required wireless resources. In this case, it is more reasonable to allocate another channel for these sources. Therefore, given active sources, we can divide $b_g$ components into multiple groups, then design CD scheduling intervals for each group independently. 
 \item The policy assigns CD scheduling intervals to sources in $\bigcup_{i=1}^{3}\bm{\mathcal{V}}^{(i)}$ instead of other combinations of components. This is because extra wireless resources paid when scheduling intervals are decreased for a CD structure are minimized in this case. To illustrate this point, we use the sum of updating rates of all sources to estimate the minimum number of required channels. In Example 2, the sum of updating rates is $1.875$. If we require CD scheduling intervals for other combinations of components, for example $\{\bm{\mathcal{V}}^{(2)}$,$\bm{\mathcal{V}}^{(3)}$,$\bm{\mathcal{V}}^{(4)}\}$, then the minimum sum of updating rates is $2.1$, with $c_C, c_D, c_E=5$. 
 \end{itemize}
 
 \subsection {Scheduling Active Sources}
 
 Inspired by Example 2, given active sources, we divide $b_g$ components into multiple groups and design CD scheduling intervals for each group. Then we choose offsets such that the number of required channels is minimized. % given scheduling intervals. 
 Due to $T_n$, the mathematical relationship between CD scheduling intervals and the number of required channels remains unclear. Therefore, we use the minimum positive integer no less than the sum of updating rates of sources to approximate the number of channels required by a certain group.  
 The problem can be formulated as follows:
\begin{equation}\label{grouping}
	\begin{split}
		\min_{\bm{\mathcal{P}},\bm{c}} & \sum_{j=1}^{|\bm{\mathcal{P}}|}\left\lceil{ \sum_{i\in \bm{g}_j}\frac{1}{c_i}}\right\rceil,
		\\
		s.t.&  \left\{\begin{array}{lc}
			1\leq c_m\leq \frac{1}{l_m^{\prime}}, \forall m=1,2,\dots,M,\\
			\bm{c}_j \text{ is CD}, \forall j=1,\dots,|\bm{\mathcal{P}}|,\\
			%l_n\in\mathbb{R}^+,\\
		    \text{There exists } j_i \text { for each $i$ such that } \bm{\mathcal{V}}^{(i)}\nsubseteq\bm{g}_{j_i},
		\end{array}\right.
	\end{split}
\end{equation}
where $\bm{\mathcal{P}}$ is a grouping method of $b_g$ components. $|\bm{\mathcal{P}}|$ is the number of groups. Sources in group $j$ is denoted by a set $\bm{g}_j$ and the corresponding scheduling intervals are denoted by $\bm{c}_j$. In the first constraint, $c_m$ is the scheduling interval of source $m$ and $\frac{1}{l_m^{\prime}}$ is the maximum scheduling interval derived by solving \eqref{active source}. The last two constraints mean that sources in the same component should be put in the same group and each group requires CD scheduling intervals. 

 Problem \eqref{grouping} is hard to solve due to the large solution space. Therefore, a heuristic algorithm named distance-based clustering (DC) is proposed in Algorithm 1. First of all, we determine the optional number of groups (Line 1). Since $\bm{\mathcal{P}}$ is a partition of the components, $|\bm{\mathcal{P}}|\leq b_g$. Additionally, we assume the base of scheduling intervals for each group possesses different values. Thus there are no more than $\min\{b_g,|\bm{u}|\}$ groups. Here $\bm{u}$ includes all distinct values of set $\{\frac{1}{l_m^{\prime}}|m\in\bigcup_{n=1}^{N}\bm{\mathcal{M}}_n\}$ and $|\bm{u}|$ is the number of elements of $\bm{u}$. For each number of groups, we identify all possible combinations of bases (Line 2). For each combination of bases (Line 3), DC puts sources in $\bm{\mathcal{V}}^{(j)}$ into the `nearest' group (Line 4 to 8).  
 Distance in the algorithm is defined as follows:
 
 \begin{definition} \label{distance}
 	Distance between the $i$-th group with base $e$ and $\bm{\mathcal{V}}^{(j)}$ is:
 	\begin{equation}
     		D_{i,j}=\sum_{m\in\bm{\mathcal{V}}^{(j)}}h_{i,m},
 	\end{equation}
 where: 
 	\begin{equation}
 		\begin{split}
 			h_{i,m}=&  \left\{\begin{array}{lc}
 				\frac{1}{\lfloor\frac{1}{l_m^{\prime}}/e\rfloor e}-l_m^{\prime}, &\frac{1}{l_m^{\prime}}\geq e;\\
 				\infty, &\frac{1}{l_m^{\prime}}<e.\\	
 			\end{array}\right.
 		\end{split}
 	\end{equation}	  
 \end{definition}

 According to the definition, the policy puts sources in $\bm{\mathcal{V}}^{(j)}$ into the group with a base close to divisors of $\frac{1}{l_m^{\prime}},m\in\bm{\mathcal{V}}^{(j)}$. This is because we aim at minimizing extra wireless resources required when we decrease the scheduling intervals of some sources for a CD structure. Since we require CD scheduling intervals for each group, sources with $\frac{1}{l_m^{\prime}}$ smaller than the base can not be packed in the corresponding group. Therefore, if the current combination of bases leads to infinite distance between a certain source to all groups, we will abandon such a combination of bases and move to the next one (Lines 5 to 7). After determining a grouping scheme given bases of groups, we then find CD scheduling intervals for each group with the minimum sum of updating rates (Line 8) by a modified version of Aion which is a fast procedure proposed in \cite{liu2021aion}.

 \begin{algorithm}[t]
 	\caption{Distance-based Clustering (DC)}
 	\begin{algorithmic}[1]
 		\REQUIRE  
 		Active sources and the corresponding maximum scheduling intervals;
 		\ENSURE 
 		$\bm{\mathcal{P}}$,$\bm{c}^{\star}$; 
 		\FOR{$i\in[1,\min\{b_g,|\bm{u}|\}]$}
 		\STATE The optional set of centers: $\bm{u}_{ce}=\{\bm{z}|\bm{z}\nsubseteq\bm{u},|\bm{z}|=i\}$.
 		\FOR {each optional set of centers}
 		\IF {$\min_{j\leq i}h_{i,m}<\infty,\forall m\in\bigcup_{n=1}^{N}\bm{\mathcal{M}}_n$}
 		\FOR {$j=\{1,\cdots,b_g$\}}
 		\STATE Put sources in $\bm{\mathcal{V}}^{(j)}$ into the group with the minimum distance.
 		\ENDFOR
        \STATE Derive CD scheduling intervals with minimum sum of updating rates for each group.  
        \ENDIF
 		\ENDFOR
 		\ENDFOR
 		\STATE $\bm{\mathcal{P}}$ is a grouping method minimizing the objective of \eqref{grouping} over all the optional number of groups and bases. $\bm{c}^{\star}$ is the corresponding scheduling intervals.
 		
 		\label{code:recentEn}
 	\end{algorithmic}
 \end{algorithm}

 Given the grouping scheme and corresponding scheduling intervals $\bm{c}^{\star}$, we then choose offsets such that the number of required channels is minimized. 
 Such offsets $\bm{o}=\{o_m|m\in\bigcup_{n=1}^{N}\bm{\mathcal{M}}_n\}$ can be derived by solving the following problem:
 \begin{equation}\label{solve offset}
 	\begin{split}
 		\min_{\bm{o}} & \max_{t=1,\cdots,C}\sum_{m\in\bigcup_{n=1}^N\bm{\mathcal{M}}_n}U_m(t),
 		\\
 		s.t.\quad&  \left\{\begin{array}{lc}
 			\max_{m\in\bm{\mathcal{M}}_n}o_{m}\leq o_{z_n}, \forall n,\\
 			\max_{m\in\bm{\mathcal{M}}_n}(o_{z_n}-o_{m})\;\mathrm{mod}c_m^{\star}\leq T_n, \forall n,\\
 			o_m\leq c_m^{\star},\\
 			U_{m}(o_m+jc_{m}^{\star})=1, \forall
 		 j\in\{0,1,\cdots,\frac{C}{c_m^{\star}}-1\},\\
 		\end{array}\right.
 	\end{split}
 \end{equation}
 where $C$ is the least common multiple of elements in $\bm{c}^{\star}$ and $z_n$ is the source with the largest scheduling interval $c_{z_n}^{\star}$ in $\bm{\mathcal{M}}_n$. Since we only consider homogeneous scheduling policies with fixed scheduling interval $c_m$ and offset $o_m\leq c_m$, $U_m(t+C)=U_m(t)$ holds for all time slots and all sources. Therefore, we only need to solve \eqref{solve offset} within any $C$ consecutive time slots. The objective is the number of required channels. Constraints in \eqref{solve offset} ensure that each scheduling of source $z_n$ leads to a successful fusion and an AoI reduction in the next slot. 
 
 Based on previous analysis, SCPA is proposed in Algorithm 2. Given AoI constraints for $N$ regions, we first identify active sources and the maximum scheduling interval for each source (Line 1). Then we divide active sources into several groups and design CD scheduling intervals for sources in each group with DC (Line 2). Finally, we choose offsets for sources such that the number of required channels is minimized (Line 3).

\section{Numerical Results}

\subsection{Setup}

We consider 2 scenarios, the first one includes $3\times3=9$ regions and the second one has $6\times6=36$ regions.  
Regions monitored by each source are determined by three factors (Fig.~\ref{three factors}): 1) Location of the source; 2) Orientation of the source, there are 4 possible choices: up, down, left, and right. 3) Coverage of the source, there are 3 possible options: one region, 2 adjacent regions, and 3 adjacent regions.  
We assume $\bm{\mathcal{F}}_n$ includes all sources located in region $n$ and each $\bm{k}_{n,i}$ includes 2 different sources that cover region $n$ but are not located within it. 
To derive the following results, we set a source located in each region. 

\begin{algorithm}[t]
	\caption{Scheduling for CP with Asynchronous status updates (SCPA)}
	\begin{algorithmic}[1]
		\REQUIRE  
		$\bm{d}$,  $\{\bm{\mathcal{K}}_1, \cdots, \bm{\mathcal{K}}_N\}$, $\{T_1,\cdots,T_N\}$;
		\ENSURE 
		A homogeneous scheduling policy $\bm{\pi}$. 
		\STATE Determine active sources and the corresponding maximum scheduling intervals by solving \eqref{active source}.
		\STATE Determine scheduling intervals $\bm{c}^{\star}$ for the sources by DC.  
		\STATE Determine offsets $\bm{o}$ for the sources by solving \eqref{solve offset}.
	\end{algorithmic}
\end{algorithm}

\begin{figure} 
	\centering
	\subfigure[]{\includegraphics[scale=0.29]{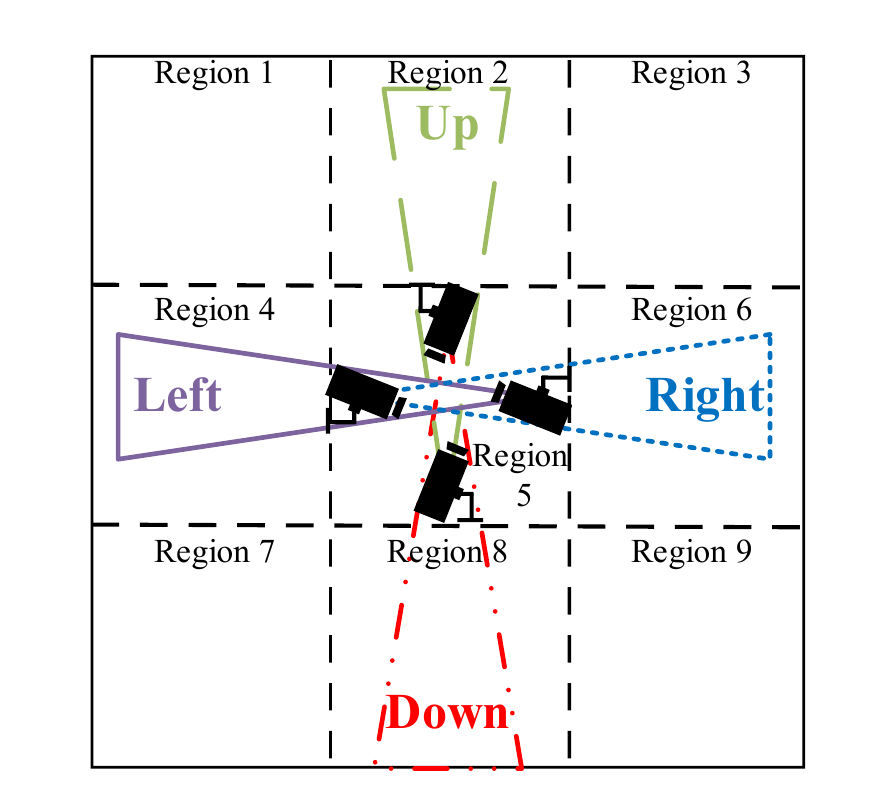}}
	\subfigure[]{\includegraphics[scale=0.29]{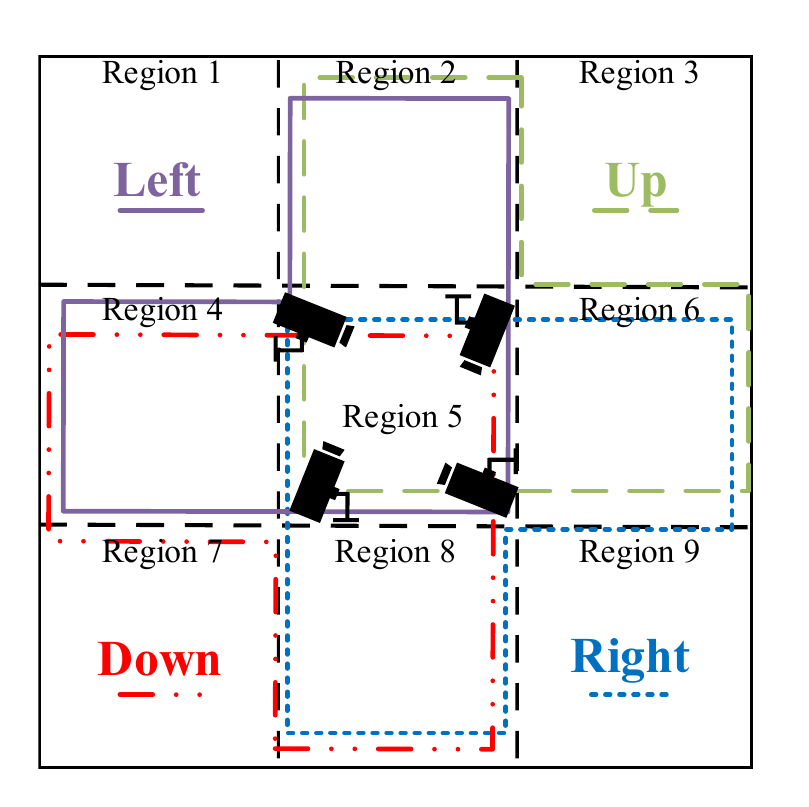}}
	\caption{Regions covered by a source with different orientation and same location, when the coverage of source is (a)  2 regions; (b) 3 regions.}
	\label{three factors}
\end{figure}

\begin{figure}
	\centering
	\includegraphics[scale=0.35]{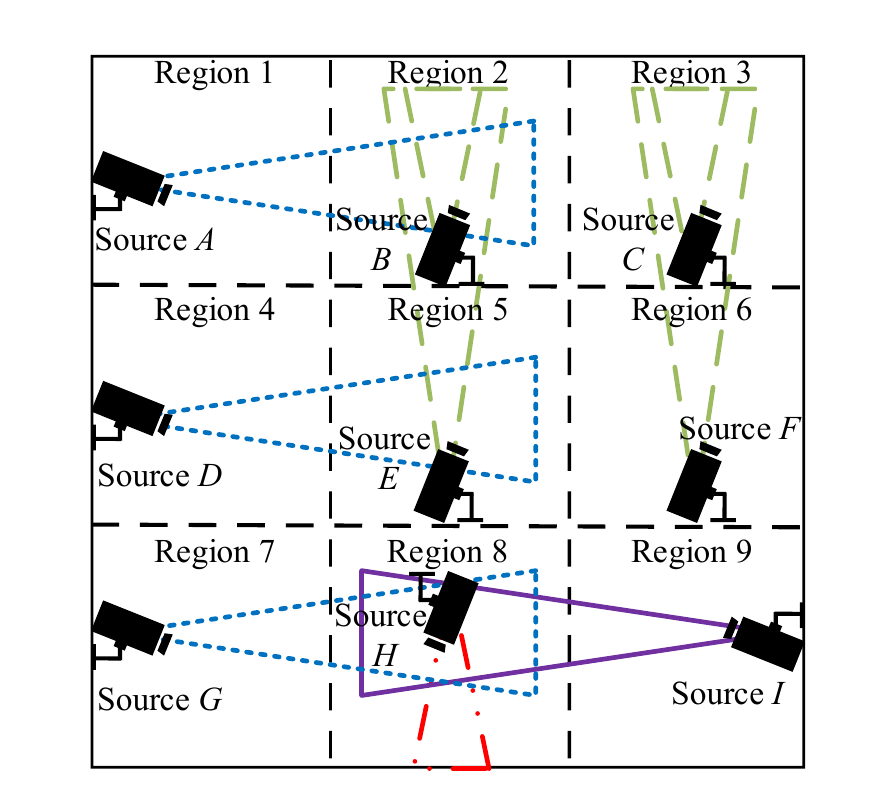}
	\vspace{-0.1in}
	\caption{An example with $3\times3$ regions and $9$ sources.}
	\label{r9example}
\end{figure}

\subsection{An example}

We start with an example to illustrate SCPA and show AoI constraints are satisfied. Consider $9$ regions and $9$ sources with coverage of 2 regions from $A$ to $I$ in Fig.~\ref{r9example}. However, each source in $\{B,C,H\}$ covers a single region due to the orientation. AoI constraints are $[6, 5, 2,  2, 7, 4, 3, 8, 7]$ and $T_n=1,\forall n$. By solving \eqref{active source}, active sources are $\{A,C,D,E,F,G,I\}$ and the corresponding maximum scheduling intervals are $[5,2,2,5,4,3,7]$. All regions possess an active source in $\bm{\mathcal{F}}_n$, except regions 2 and 8. Active sources for these two regions are $\{A,E\}$ and $\{G,I\}$. If DC packs all sources in the same group, then the corresponding scheduling intervals are $[4,2,2,4,4,2,4]$, possessing CD structure after sorting. After that, we obtain offsets $[3,2,2,3,1,1,4]$ for active sources by solving \eqref{solve offset}. Then the homogeneous scheduling policy is:
\vspace{-0.1in}
\begin{center}
	\setlength{\tabcolsep}{1.3mm}{
		\begin{tabular}{l c c c c c c c c c c c c c} 		
			time slot&1&2&3&4&5&6&7&8&9&10&11&12\\
			channel 1 & F & D & A & D& F & D & A & D & F & D & A & D & $\cdots$	\\
			channel 2 &G&C& E & C & G &C& E & C & G & C & E & C & $\cdots$\\
			channel 3 & $\square$ & $\square$ & G & I& $\square$ & $\square$ & G & I & $\square$ & $\square$ & G & I & $\cdots$	\\
	\end{tabular}}
\end{center}
Under the policy, sources activated for nine regions except regions 2 and 8 are scheduled with intervals smaller than the corresponding AoI constraints. $A_2(t)$ and $A_8(t)$ reduce to 1 at least once in every 4 slots. The constraints are satisfied. 

\subsection{Performance of the proposed algorithms}

In this subsection, we compare the number of channels required by the proposed scheduling policies and the lower bound in Lemma \ref{lbub}. Simulations are carried in the second scenario. Three different cases are considered:
\begin{itemize}
\item Case 1: the center employs fusion algorithms with $T_n=d_n-1, \forall n$.
\item Case 2: the center employs fusion algorithms with $T_n=1,\forall n$.
\item Case 3: the center is not able to fuse status updates from multiple sources and can only rely on sources in $\bm{\mathcal{F}}_n$ to obtain information of region $n$. 
\end{itemize}

Fig.~\ref{lbr36} displays the lower bounds and the number of channels required by SCPA in the above three cases with different coverage of sources. %According to Lemma 1, 
The lower bound in Lemma 1 is independent with $T_n$ and the same bar is used to show the bound in Cases 1 and 2. The coverage of sources are the same and we increase the coverage from a single region to $3$ regions with fixed location and orientation of sources. We generate different orientations of sources and AoI constraints of $N$ regions uniformly  $50$ times and only show the averaged results.

Additionally, if $T_n=d_n-1$, then \eqref{offsets_general scheduling interval} always hold given whatever offsets if scheduling intervals of sources in $\bm{\mathcal{M}}_n$ do not exceed $d_n$. Therefore, we can use TGA proposed in \cite{wang2022grouping} for active sources given their maximum scheduling intervals. Since TGA can produce not only homogeneous policies and achieve near-optimal performance, we use it after SCPA identifies active sources in Case 1. Similarly, since the center is not able to fuse information in Case 3, we also determine active sources by SCPA and use TGA to establish a policy for them. To reduce the running time of SCPA, we limit the optional number of groups to no more than 2 and set the only optional combination of bases to be $\{2,3\}$ when there are 2 groups. Since $\bm{\mathcal{F}}_n\neq\emptyset, \forall n$, we assume the center will schedule sources under the scheduling policy established for Case 3 if the policy for Case 1 or 2 requires more channels.

\begin{figure}
	\centering
	\includegraphics[scale=0.45]{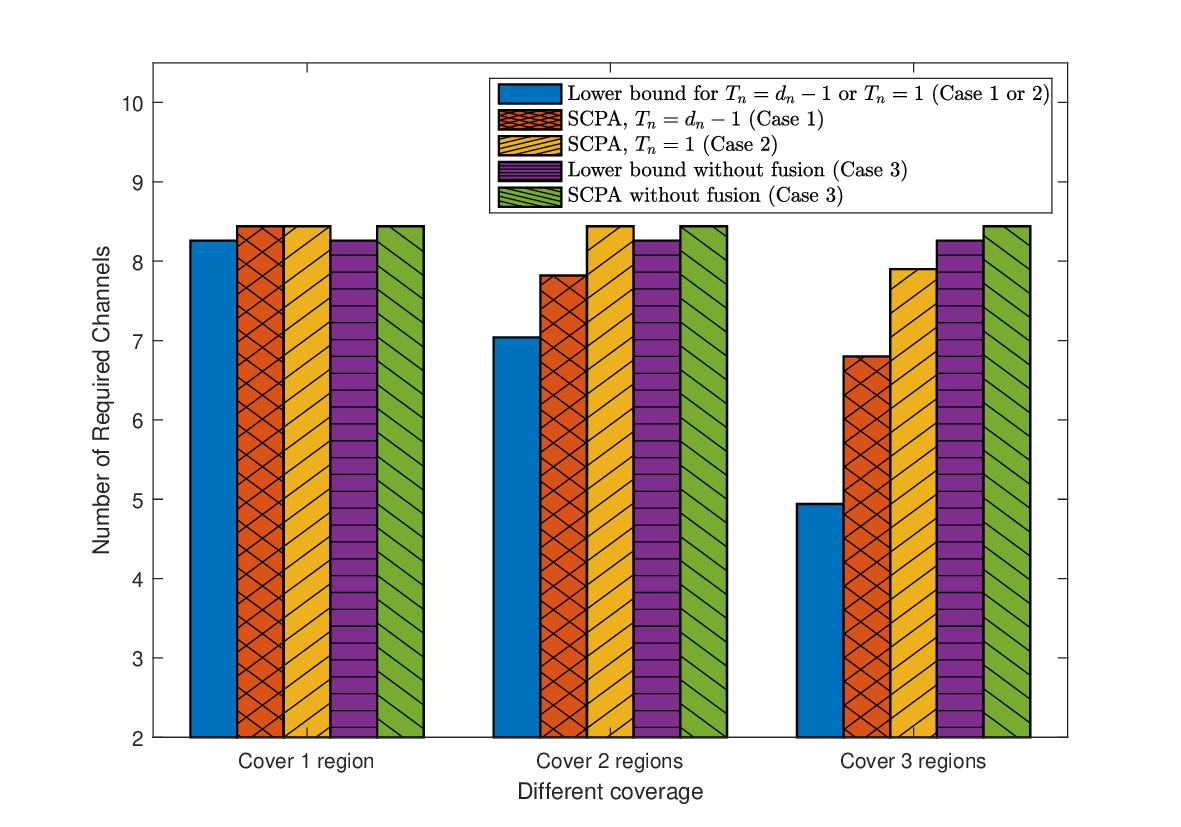}
	\vspace{-0.2in}
	\caption{Lower bound and performance of SCPA over 50 instances, where AoI
		constraints are generated uniformly in interval [2, 10].}
	\label{lbr36}
\end{figure}

In Fig.~\ref{lbr36}, it can be seen that large coverage of sources reduces the number of required channels if the center employs fusion algorithms. When each source covers a single region, we need information from 36 sources and the lower bounds of the three cases are the same. This is because status updates from different sources include information of different regions and can not be used for fusion. When the coverage increases to 2 or 3 regions, the lower bounds for the first two cases decrease. This is because each status update may include information of several regions and lead to AoI reduction of these regions with fusion algorithms. In this case, the number of required channels will decrease.  

Similar to the lower bounds, the numbers of channels required by SCPA are all the same in three cases when coverage is a single region. When coverage increases to 2 regions, the gap between the lower bound and the number of channels required by SCPA is $11.08\%$ and $19.89\%$ in the first two cases, respectively. Note that the number of channels required in Cases 2 and 3 remains the same, even when the number of activated sources is smaller in Case 2. This is because offsets of active sources are constrained by $T_n=1$. 
When the coverage increases to 3 regions, the advantage of information fusion is expanded. Compared with the number of channels required in Case 3, the numbers of channels required in the first two cases are reduced by $19.43\%$ and $6.4\%$, respectively. This is because the number of activated sources is further reduced. 
 Also, the gap between the lower bound and the number of required channels increases in Cases 1 and 2.  
In this case, 
reconstruction of information of more regions will rely on sources in $\overline{\bm{\mathcal{F}}}_n$. Meanwhile, SCPA requires CD scheduling intervals of sources in $\bm{\mathcal{M}}_n$. Therefore, scheduling intervals and offsets for active sources will be more constrained when the coverage increases. Finally, Case 1 requires the minimum number of channels. This shows that the development of fusion algorithms can reduce the communication load.

\section{Conclusion}

We focus on a novel scheduling problem under hard AoI constraints in CP with asynchronous status updates. An algorithm named SCPA is proposed to establish a scheduling policy and minimize the number of required channels. Since there may be overlapping among status updates from multiple sources, SCPA first chooses active sources with the minimum sum of updating rates. SCPA then requires sources activated for the same region to be scheduled with CD intervals. Such an operation ensures the existence of a homogeneous scheduling policy satisfying AoI constraints. Given CD scheduling intervals, SCPA identifies offset for each active source with the minimized number of required channels. The number of channels required by SCPA can reach only 12\% more than the derived lower bound according to numerical results.

\section{Acknowledgment}
This work is sponsored in part by the National Key R\&D Program of China No. 2020YFB1806605, by the Natural Science Foundation of China (No. 62341108, No. 62221001, No. 62022049, No. 62111530197), by the Beijing Natural Science Foundation under grant L222044, by the Talent Fund of Beijing Jiaotong University under grant 2023XKRC030 and Hitachi Ltd.

\bibliography{test_mosc}
\bibliographystyle{IEEEtran}

\end{document}